\newtheorem{definition}{Definition}
\newtheorem{theorem}{Theorem}
\newtheorem{proposition}{Proposition}
\newtheorem{example}{Example}
\newtheorem{remark}{Remark}
\newcommand{\A}{\mathcal{A}}
\newcommand{\Sall}{\mathcal{S}}
\newcommand{\Img}{\mathrm{Img}}
\newcommand{\Ccal}{\mathcal{C}}
\title{A general theory of identification}
\author{Guillaume Basse \\ Department of MS\&E and Department of Statistics\\ Stanford
\and Iavor Bojinov \\ Harvard Business School \\ Harvard University}
\begin{document}

\maketitle



\begin{abstract}
What does it mean to say that a quantity is identifiable from the data? Statisticians seem to agree on a definition in the context of parametric statistical models --- roughly, a parameter $\theta$ in a model $\mathcal{P} = \{P_\theta: \theta \in \Theta\}$ is identifiable if the mapping $\theta \mapsto P_\theta$ is injective. This definition raises important questions: Are parameters the only quantities that can be identified? Is the concept of identification meaningful outside of parametric statistics? Does it even require the notion of a statistical model? Partial and idiosyncratic answers to these questions have been discussed in econometrics, biological modeling, and in some subfields of statistics like causal inference. This paper proposes a unifying theory of identification that incorporates existing definitions for parametric and nonparametric models and formalizes the process of identification analysis. The applicability of this framework is illustrated through a series of examples and two extended case studies.

\end{abstract}


\section{Introduction}
Statistical inference teaches us ``how'' to learn from data, whereas identification analysis explains ``what'' we can learn from it. Although ``what'' logically precedes ``how,'' the concept of identification has received relatively less attention in the statistics community. In contrast, economists have been aware of the identification problem since at least the 30's \citep[][Chapter 9]{frisch1934statistical} and have pioneered most of the research on the topic. \cite{koopmans1949identification} coined the term ``identifiability'' and emphasized a ``clear separation between problems of statistical inference arising from the variability of finite samples, and problems of identification in which [the statistician] explore[s] the limits to which inference even from an infinite number of observations is subject.''

\cite{hurwicz1950generalization} and \cite{koopmans1950identification} formalized the intutive idea of identification and developed a general theory for statistical models. The literature then fractures, with specialized definitions of identifiability arising in different fields, including biological systems modeling \citep{jacquez1990parameter}; parametric models \citep{rothenberg1971identification,hsiao1983identification, paulino1994identifiability}; ecological regression \citep{goodman1959some,cross2002regression}; nonparametric models \citep{matzkin2007nonparametric}; causal models \citep{pearl2009causal,shpitser2008complete}; and nonparametric finite population models \citep{manski1989anatomy,manski2009identification}. This divergence and lack of coherent unifying theory has obfuscated some central ideas and slowed down the development of the field.

This paper proposes a general framework for studying identifiability that encompasses existing definitions as special cases. We make three main contributions. First, we study the common structure of the specialized definitions and extract a single general --- and mathematically rigorous --- definition of identifiability. Abstracting away the specifics of each domain allows us to recognize the commonalities and make the concepts more transparent as well as easier to extend to new settings.  Second, we use our definition to develop a set of results and a systematic approach for determining whether a quantity is identifiable and, if not, what is its identification region (\emph{i.e.,} the set of values of the quantity that are coherent with the data and assumptions). This process of {\em identification analysis}, formalizes ideas introduced in the literature on partial identification \citep{manski2003partial,manski2009identification, tamer2010partial}. Third, we provide concrete examples of how to apply our definition in different settings and include two in-depth case studies of identification analysis.

The paper proceeeds as follows. Section~\ref{section:theory} introduces our
general theory, starting with some backgound on binary relations
(Section~\ref{section:background}), which are the key mathematical objects
underpinning our definition of identification
(Section~\ref{section:identifiability}). We illustrate the flexibility and
broad applicability of our definition in Section~\ref{section:id-stat-models}
and discuss identification analysis in Section~\ref{section:id-analysis}.
Finally, we provide two case studies in Section~\ref{section:case-studies}.


\section{General theory of identification}
\label{section:theory}

\subsection{Background on binary relations}
\label{section:background}

Let $\Theta$ and $\Lambda$ be two sets. A binary relation $R$ from 
$\Theta$ to $\Lambda$ is a subset of the cartesian product 
$\Theta \times \Lambda$. For $\vartheta \in \Theta$ and $\ell \in \Lambda$, we say
that $\vartheta$ is $R$-related to $\ell$ if $(\vartheta, \ell) \in R$. Following 
convention \citep{halmos2017naive}, we use the notation $\vartheta R \ell$ as an abbreviation for
$(\vartheta, \ell) \in R$. Below, we define four important
properties that a binary relation may have \citep{freedman2015some}; see \cite{lehman2010mathematics} for an in-depth discussion.
\begin{definition} \label{D:binary-relation-properties}
  A binary relation $R$ from $\Theta$ to $\Lambda$ is said 
  to be:
  \begin{itemize}
  \item {\em injective} if 
    \begin{equation*}
      \forall \vartheta, \vartheta' \in \Theta, \forall \ell \in \Lambda,
      \qquad \vartheta R \ell \text{ and } \vartheta' R \ell 
      \quad \Rightarrow \quad 
      \vartheta = \vartheta'
    \end{equation*}
    
  \item {\em surjective} if 
    \begin{equation*}
      \forall \ell \in \Lambda, \exists \vartheta \in \Theta: 
      \quad \vartheta R \ell
    \end{equation*}
    
  \item {\em functional} if 
    \begin{equation*}
      \forall \vartheta \in \Theta, \ \forall \ell,\ell'\in \Lambda,
      \qquad \vartheta R \ell \text{ and } \vartheta R \ell' 
      \quad \Rightarrow \quad \ell = \ell'
    \end{equation*}
    
  \item {\em left-total} if 
    \begin{equation*}
      \forall \vartheta \in \Theta,\ \exists \ell \in \Lambda: 
      \quad \vartheta R \ell
    \end{equation*}
    
  \end{itemize}
\end{definition}
A binary relation that is both functional and left-total is called a function. 
\begin{example}
  Let $\Theta$ be the set of prime numbers, $\Lambda$ be the set of integers, and
  $R$ the ``divides'' relation such that $\vartheta R \ell$ if $\vartheta$ divides
  $\ell$ (e.g., $3 R 3$, $3 R 6$, but $3$ is not in relation with 2). In this case,
  $R$ is surjective and left-total, but not injective nor functional.
\end{example}
\begin{example} \label{example:square}
  Let $\Theta = \mathbb{R}$, $\Lambda = \mathbb{R}$, and $R$ be the ``square''
  relation defined by $\vartheta R \ell$ if $\vartheta^2 = \ell$. In this case, $R$ is
  left-total and functional, but it is not surjective (e.g., there is no
  $\vartheta \in \Theta$ such that $\vartheta R (-4)$) nor injective (e.g., $2R4$ and $-2R4$).
  If we instead consider $\Lambda = \mathbb{R}_{\ge 0}$, the set of all
  positive real numbers and 0, then $R$ is both surjective and injective. 
\end{example}

Example \ref{example:square}, shows that the properties described in Definition~\ref{D:binary-relation-properties} depend on both the binary relation and the sets $\Lambda$ and $\Theta$. Throughout this paper, whenever we refer to properties of binary relations, the dependence on $\Lambda$ and $\Theta$ will always be implied.

\subsection{Identification in sets and functions}
\label{section:identifiability}

We start by defining identifiability for a binary relation. The definition forms the basis of our unifying framework as all the other definitions of identifiability are obtainable by specifying appropriate $\Lambda$, $\Theta$, and $R$.

\begin{definition}[Identifiability]\label{D:identifiability-binary}
	Let $\Theta$ and $\Lambda$ be two sets, and $R$ a surjective and 
    left-total binary relation from $\Theta$ to $\Lambda$. Then,
    \begin{itemize}
    	\item $\Theta$ is {\em R-identifiable} at $\ell_0 
        \in \Lambda$ if there exists a unique $\vartheta_0 \in \Theta$ 
        such that $\vartheta_0 R \ell_0$;
        \item $\Theta$ is {\em everywhere R-identifiable} in 
        $\Lambda$ if it is {\em R-identifiable} at $\ell_0$ for all 
        $\ell_0 \in \Lambda$. In this case, we usually say that $\Theta$ is $R$-identifiable. 
    \end{itemize}
\end{definition}
The distinction between $R$-identifiable at $\ell_0$ and everywhere has important practical implications. For example, when handling missing data, the missing at random assumption aims to obtain identification at the observed missing data pattern (\emph{i.e.}, at $\ell_0$); whereas, the stronger missing always at random aims to be everywhere identifiable \citep{bimetrkia2015Mealli,bojinov2017diagnostics}.

Formally, identifiability everywhere is equivalent to the binary relation being injective.
\begin{proposition}\label{prop:injective}
	Let $\Theta$ and $\Lambda$ two sets, and $R$ a surjective and 
    left-total binary relation from $\Theta$ to $\Lambda$. 
    $\Theta$ is $R$-identifiable if and only if $R$ is injective.
\end{proposition}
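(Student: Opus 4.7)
The plan is to prove both directions by directly unpacking the definitions, since the statement is essentially a tautology once surjectivity is invoked to guarantee existence and injectivity is invoked to guarantee uniqueness.

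For the forward direction ($\Theta$ is $R$-identifiable $\Rightarrow$ $R$ is injective), I would fix arbitrary $\vartheta, \vartheta' \in \Theta$ and $\ell \in \Lambda$ with $\vartheta R \ell$ and $\vartheta' R \ell$, and aim to conclude $\vartheta = \vartheta'$. By the hypothesis of everywhere $R$-identifiability applied at $\ell$, there exists a \emph{unique} element $\vartheta_0 \in \Theta$ satisfying $\vartheta_0 R \ell$. Both $\vartheta$ and $\vartheta'$ satisfy this relation, so uniqueness forces $\vartheta = \vartheta_0 = \vartheta'$, which is precisely the injectivity condition from Definition~\ref{D:binary-relation-properties}.

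For the backward direction ($R$ injective $\Rightarrow$ $\Theta$ is $R$-identifiable), I would fix an arbitrary $\ell_0 \in \Lambda$ and show $R$-identifiability at $\ell_0$, which breaks into existence and uniqueness of some $\vartheta_0$ with $\vartheta_0 R \ell_0$. Existence is immediate from the assumed surjectivity of $R$: by definition, there exists $\vartheta_0 \in \Theta$ such that $\vartheta_0 R \ell_0$. Uniqueness follows from injectivity: any other $\vartheta_0' \in \Theta$ with $\vartheta_0' R \ell_0$ must, by injectivity applied to the pair $(\vartheta_0, \vartheta_0')$ at $\ell_0$, equal $\vartheta_0$. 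Since $\ell_0$ was arbitrary, $\Theta$ is $R$-identifiable at every $\ell_0 \in \Lambda$, hence everywhere $R$-identifiable.

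There is no real obstacle here — the argument is entirely formal manipulation of the quantifiers in Definition~\ref{D:binary-relation-properties} and Definition~\ref{D:identifiability-binary}. The only subtlety worth flagging explicitly is the role of surjectivity: injectivity alone would not yield identifiability at every $\ell_0$, because without surjectivity there could be points $\ell_0 \in \Lambda$ with no $\vartheta_0 \in \Theta$ related to them at all, violating the existence half of ``unique $\vartheta_0$.'' This is why the proposition bakes surjectivity into its hypotheses, matching the standing assumption in Definition~\ref{D:identifiability-binary}. The role of left-totality does not appear explicitly in the argument; it is carried along because it is part of the definition of a valid identification setup, not because the equivalence depends on it.
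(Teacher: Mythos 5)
Your proof is correct and matches the paper's treatment: the paper simply states ``This is a restatement of the definition,'' and your two directions (surjectivity for existence, injectivity for uniqueness, and conversely uniqueness forcing injectivity) are exactly the unpacking that remark elides. Your observation that left-totality is not actually used in the equivalence is also accurate.
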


\begin{proof}
This is a restatement of the definition. 
\end{proof}
\begin{figure}
	\centering
	\includegraphics[scale=0.5]{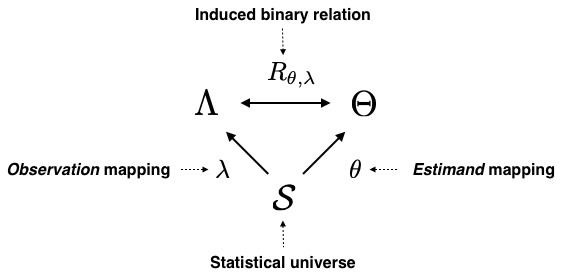}
    \caption{Diagram of the main objects.}
    \label{figure:diagram-basic}
\end{figure}

In most practical applications, we can derive a natural specification of the problem by working with an \emph{induced binary relation}. Intuitively, an induced binary relation connects ``what we know'' to ``what we are trying to learn'' through a ``statistical universe'' in which we operate. 
\begin{definition}[Induced binary relation]
\label{definition:ind-bin-rel}
Let $\Sall$ be a set and $G(\Sall)$ be the set of all functions with domain 
$\Sall$. Let $\lambda, \theta \in G(\Sall)$, and 
$\Theta = \Img(\theta)$ and $\Lambda = \Img(\lambda)$ their respective 
images. The binary relation from $\Theta$ to $\Lambda$ defined as 
$R_{\theta, \lambda} = \{(\theta(S), \lambda(S)), S\in \Sall\}$ is called 
the {\em induced binary relation} associated with $(\theta, \lambda)$.
\end{definition}
The examples in Section~\ref{section:id-stat-models} show how $\Sall$, $\lambda$
and $\theta$ map to real problems. In broad terms, the
{\em statistical universe} $\Sall$ contains all the objects relevant to a given
problem; the {\em observation mapping} $\lambda$ maps $\Sall$ to ``what we
know''; and the {\em estimand mapping} $\theta$ maps $\Sall$ to ``what we are
trying to learn''. Figure \ref{figure:diagram-basic},  illustrates how these concepts are connected.

The following proposition follows immediately from the definition of an induced binary relation.
\begin{proposition}
	Let $\Sall$ be a set, and $\theta,\lambda \in G(\Sall)$. The induced 
    binary relation $R_{\theta,\lambda}$ is surjective and left-total.
\end{proposition}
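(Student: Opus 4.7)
The plan is to verify the two properties directly from the definition of the induced binary relation, since everything needed is essentially built in by construction. Recall that $R_{\theta,\lambda} = \{(\theta(S), \lambda(S)) : S \in \Sall\}$, with codomains restricted to the images $\Theta = \Img(\theta)$ and $\Lambda = \Img(\lambda)$. The restriction to images is precisely the device that forces surjectivity and left-totality to hold automatically.

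For surjectivity, I would fix an arbitrary $\ell \in \Lambda$. Because $\Lambda$ is defined as $\Img(\lambda)$, there must exist some $S \in \Sall$ with $\lambda(S) = \ell$. Setting $\vartheta := \theta(S)$, which lies in $\Theta = \Img(\theta)$ by construction, the pair $(\vartheta, \ell) = (\theta(S), \lambda(S))$ belongs to $R_{\theta,\lambda}$ by definition, so $\vartheta R_{\theta,\lambda} \ell$. This witnesses surjectivity.

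For left-totality, I would run the symmetric argument. Given any $\vartheta \in \Theta = \Img(\theta)$, pick $S \in \Sall$ such that $\theta(S) = \vartheta$, and then set $\ell := \lambda(S) \in \Lambda$. Again $(\vartheta, \ell) = (\theta(S), \lambda(S)) \in R_{\theta,\lambda}$, giving the desired $\ell$.

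Honestly, there is no real obstacle here: the proposition is a direct bookkeeping consequence of taking $\Theta$ and $\Lambda$ to be the images of $\theta$ and $\lambda$. The only mildly delicate point worth flagging is that if one had taken $\Theta$ or $\Lambda$ to be strict supersets of the images (for example, arbitrary codomains), then left-totality or surjectivity could fail; restricting to the images is exactly what makes the statement clean, and this is the observation I would highlight in the writeup.
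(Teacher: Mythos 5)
Your proof is correct and matches the paper's intent exactly: the paper simply asserts that the proposition ``follows immediately from the definition,'' and your argument is precisely the direct verification --- using that $\Theta = \Img(\theta)$ and $\Lambda = \Img(\lambda)$ guarantee a witness $S \in \Sall$ for any $\vartheta$ or $\ell$ --- that the authors leave implicit. Your closing remark about why the restriction to images is essential is a nice observation, consistent with the paper's own discussion following Example~\ref{example:square}.
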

Applying Definition \ref{definition:ind-bin-rel} to the induced binary relation allows us to extend the notion of identification from sets to functions.

%
\begin{definition}[Identifiability of a function]\label{D:identifiability-function}
Consider $\Sall$ and $\theta, \lambda \in G(\Sall)$, and let 
$\Theta = \Img(\theta)$ and $\Lambda = \Img(\lambda)$.
\begin{itemize}
	\item The function $\theta$ is said to be identifiable at 
    $\ell_0 \in \Lambda$ if $\Theta$ is 
    $R_{\theta,\lambda}$-identifiable at $\ell_0$. That is, for $\ell_0 \in \Lambda$ let $\mathcal{S}_0 = \{S \in \mathcal{S}  : \lambda(S) = \ell_0 \}$ then $R_{\theta, \lambda}$ is identifiable at $\ell_0$ iff there exists $\vartheta_0 \in \Theta$, such that, for all $S \in \mathcal{S}_0$, we have that $\theta(S)=\vartheta_0$.
    \item The function $\theta$ is said to be identifiable everywhere 
    from $\lambda$ if $\Theta$ is $R_{\theta,\lambda}$-identifiable 
    everywhere in $\Lambda$. We will usually simply say that 
    $\theta$ is identifiable.
\end{itemize}
\end{definition}

Definition~\ref{D:identifiability-function} is the workhorse allowing us to unify
the notions of identifiability used in the literature for parametric and nonparametric models, as well as for finite populations.

\begin{remark}
Both Definitions~\ref{D:identifiability-binary} and \ref{D:identifiability-function} use the adjective ``identifiable'' to qualify a set $\Theta$ or a mapping $\theta$. The terminology arises naturally from the interpretation of $\theta$ and $\lambda$; indeed, we write that the estimand mapping $\theta$ is identifiable from the observation mapping $\lambda$. Proposition~1, however, makes it clear that identifiability is fundamentally a property of the binary relation $R$, whether we apply the adjective identifiable to $\theta$, $\lambda$, or the whole model is mostly a matter of semantics. 
\end{remark}

\section{Identification in statistical models and finite populations}
\label{section:id-stat-models}

There are two significant benefits of using our framework to tackle identification in statistical models. First, the flexibility of our general formulation allows us to work directly with both parametric and nonparametric models, without having to introduce separate definitions. Second, relying on binary relations instead of functions enriches the class of questions that can be addressed through
the lens of identification.

In this section, we make extensive use of examples to illustrate the broad applicability of our framework. All examples follow a common structure: first we explain the context; second, we ask an informal identification question; third, we show how to formalize the question in our framework by specifying $\Sall$, $\lambda$, and $\theta$ appropriately. The process of answering these questions, which we call {\em identification analysis} is described in Section~\ref{section:id-analysis} and illustrated in Section~\ref{section:case-studies}.

\subsection{Parametric models}

Consider a parametric model $\Lambda = \{P_\vartheta, \vartheta \in \Theta\}$,
where $\Theta$ is a finite dimensional parameter space and $P_\vartheta$ is a
distribution indexed by $\vartheta \in \Theta$. The standard definition of
parametric identification centers around the injectivity of the parametrization
({\em e.g.}, Definition~11.2.2 of \cite{casella2002statistical} and Definition~5.2 of \cite{lehmann2006theory}).

\begin{definition}[Parametric identification]\label{D:parametric-identification}
  The parameter $\vartheta$ of statistical model $\Lambda = \{P_\vartheta, \vartheta \in \Theta\}$
  is said to be identifiable if the function $\vartheta \rightarrow P_\vartheta$ is injective.
\end{definition}
For parametric statistical models, Definition~\ref{D:parametric-identification}
is equivalent to Definition~\ref{D:identifiability-function} with appropriately
chosen statistical universe $\Sall$, observation mapping $\lambda$, and estimand
mapping $\theta$.
\begin{theorem}
  For a parameter set $\Theta$, define the statistical universe to be
  $\Sall = \{(P_{\vartheta}, \vartheta), \vartheta \in \Theta\}$. Let the
  inference and estimand mappings $\lambda, \theta \in G(\Sall)$ be
  $\lambda(S) = P_\vartheta$ and
  $\theta(S) = \vartheta$, respectively. In this setting,
  Definition~\ref{D:identifiability-function} is equivalent to
  Definition~\ref{D:parametric-identification}.
\end{theorem}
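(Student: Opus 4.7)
The plan is to prove the equivalence by directly unpacking Definition~\ref{D:identifiability-function} for the specific choices of $\Sall$, $\lambda$, and $\theta$ given in the theorem, and then matching what comes out against Definition~\ref{D:parametric-identification}. The whole argument will be pure bookkeeping with no substantive content beyond checking that the symbols line up.

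First, I would compute the induced binary relation $R_{\theta,\lambda}$ explicitly. Every element of $\Sall$ has the form $S = (P_\vartheta, \vartheta)$ for some $\vartheta \in \Theta$, so by assumption $\theta(S) = \vartheta$ and $\lambda(S) = P_\vartheta$. Applying Definition~\ref{definition:ind-bin-rel} gives
\[
  R_{\theta,\lambda} = \{(\vartheta, P_\vartheta) : \vartheta \in \Theta\},
\]
with $\Img(\theta) = \Theta$ and $\Img(\lambda) = \{P_\vartheta : \vartheta \in \Theta\} = \Lambda$; these coincide with the parameter space and the model that appear in Definition~\ref{D:parametric-identification}. (Surjectivity and left-totality of $R_{\theta,\lambda}$, required to invoke Definition~\ref{D:identifiability-function}, are automatic for induced binary relations.)

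Second, by Proposition~\ref{prop:injective}, $\theta$ is identifiable from $\lambda$ in the sense of Definition~\ref{D:identifiability-function} if and only if $R_{\theta,\lambda}$ is injective. I would then unpack injectivity using Definition~\ref{D:binary-relation-properties}: it asks that whenever $\vartheta R_{\theta,\lambda} \ell$ and $\vartheta' R_{\theta,\lambda} \ell$ we have $\vartheta = \vartheta'$. Given the explicit description of $R_{\theta,\lambda}$ above, the hypothesis is exactly $\ell = P_\vartheta = P_{\vartheta'}$, so injectivity of $R_{\theta,\lambda}$ is the statement $P_\vartheta = P_{\vartheta'} \Rightarrow \vartheta = \vartheta'$, i.e., injectivity of the parametrization $\vartheta \mapsto P_\vartheta$. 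This is precisely the condition in Definition~\ref{D:parametric-identification}, closing the loop.

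Because the proof is a direct definitional unpacking, I do not expect a real obstacle. The only subtle point worth flagging in the write-up is that the relevant reading of Definition~\ref{D:identifiability-function} is ``identifiable \emph{everywhere} from $\lambda$''; identifiability at a single $\ell_0$ would be strictly weaker than injectivity of $\vartheta \mapsto P_\vartheta$ and would therefore fail to match Definition~\ref{D:parametric-identification}. Once that convention is fixed, the equivalence is immediate.
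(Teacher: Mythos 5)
Your proposal is correct and follows essentially the same route as the paper: both arguments observe that the induced relation $R_{\theta,\lambda}$ is precisely the function $\vartheta \mapsto P_\vartheta$ and then invoke Proposition~\ref{prop:injective} to reduce identifiability everywhere to injectivity of that parametrization. Your write-up is simply more explicit in unpacking the injectivity condition and in flagging the ``everywhere'' versus ``at $\ell_0$'' distinction, which the paper's one-line proof leaves implicit.
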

\begin{proof}
By construction, the induced binary relation $R_{\theta,\lambda}$ is functional and left-total; therefore, $R_{\theta,\lambda}$ is a function mapping $\vartheta$ to $P_\vartheta$. The conclusion follows from Proposition~\ref{prop:injective}
\end{proof}
One of the classic textbook examples is the identification of the parameters in a linear regression. 

\begin{example}[Linear regression]
   \label{example:linear-regression} 

  Consider a $p$-dimensional random vector $X \sim P(X)$ for 
  some distribution $P_X$ such that $E[X^tX]$ has rank 
  $r < p$, where $E$ denotes the expectation with respect to the law of $X$. Let 
  $P(Y \mid X; \beta,\sigma^2) 
  = \mathcal{N}(X^t\beta, \sigma^2)$, where $\mathcal{N}(\mu,\sigma^2)$ is the
  normal distribution with mean $\mu$ and variance $\sigma^2$, and let
  $P_{\beta,\sigma^2}(X,Y) = P(Y \mid X; \beta, \sigma^2) P(X)$. \\
  \emph{Question:} Are the regression parameters $\beta$ and $\sigma^2$
  identifiable?\\
  \emph{Our framework:} We can establish the identifiability of the parameter $\vartheta = (\beta, \sigma^2)$ from the joint
  distribution $P_{\vartheta}(X,Y)$ by letting
  $\Sall = \{(P_{\vartheta}, \vartheta), \vartheta \in \Theta\}$,
  where $\Theta = \mathbb{R} \times \mathbb{R}^{+}$,
  $\lambda(S) = P_{\vartheta}$, and $\theta(S) = \vartheta$. 
\end{example}

Even in the simple parametric setting, the benefits of the added flexibility of our general formulation become apparent when we ask more subtle questions about identifiability. For instance, using the set up of Example~\ref{example:linear-regression}, suppose we are only  interested in identifying $\beta = \phi(\vartheta)$, rather than the pair $\vartheta = (\beta, \sigma^2)$. The standard Definition~\ref{D:parametric-identification} does not apply here, since $\beta \rightarrow P_\vartheta$ is not a function; that is, each value of $\beta$ is associated with an infinite number of distributions $P_\vartheta$, with different values of the parameter $\sigma^2$. Indeed, the key limitation with existing definitions is that they focuse on the injectivity of a function. In contrast, our framework studies the injectivity of binary relations, which need not be functional. Therefore, Definition~\ref{D:identifiability-function} is directly applicable to studying the identifiability of $\beta$ by replacing $\theta(S) =(\beta,\sigma^2)$ by $\theta(S) = \beta$; generally, it allows us to consider any parameter of the model or combinations of parameters without having to introduce new definitions. 
\begin{example}[Mixtures]
  \label{example:mixtures}
  Let $Y_1 \sim \mathcal{N}(\mu_1, 1)$, let $Y_2 \sim \mathcal{N}(\mu_2, 1)$,
  and $B \sim \text{Bernoulli}(\pi)$.\\
  \emph{Question:} Which of the distributional parameters of $Y = B Y_1 + (1-B) Y_2$ are
  identifiable?\\
  \emph{Our framework:} For $\vartheta = (\mu_1, \mu_2, \pi)$, let $P_\vartheta$ be
  the normal distribution with mean $\pi\mu_1+(1-\pi)\mu_2$ and variance
  $\pi^2 + (1-\pi)^2$. Let $\Sall = \{(P_\vartheta, \vartheta), \vartheta \in \Theta\}$ where
  $\Theta = \mathbb{R}\times \mathbb{R} \times [0,1]$. The observation mapping is
  defined as $\lambda(S) = P_\vartheta$ and the estimand mappings are
  $\theta_{\pi}(S) = \pi$, $\theta_{\mu_1}(S) = \mu_1$,
  $\theta_{\mu_2}(S) = \mu_2$.%
  The question of the identifiability of the parameters can be resolved by studying
  the injectivity of the binary relations $R_{\theta_{\mu_1},\lambda}$,
  $R_{\theta_{\mu_2},\lambda}$, and $R_{\theta_{\pi},\lambda}$ as in
  Definition~\ref{D:identifiability-function}.
\end{example}
Clearly, the three induced binary relation $R_{\theta_{\mu_1},\lambda}$, $R_{\theta_{\mu_2},\lambda}$, and $R_{\theta_{\pi},\lambda}$ are not functional, making Definition~\ref{D:parametric-identification} nonapplicable. Traditionally, authors have tackled this problem by proposing a separate definition for identifying a function of $\vartheta$ (\emph{e.g.}, ~\cite{paulino1994identifiability}[Definition~2.4]); \cite{basu2006identifiability} refers to this as {\em partial identifiability}. Our definition of identification for functions of the parameter agrees with both \cite{paulino1994identifiability} and \cite{basu2006identifiability}, with the added benefit of working directly for both parameters and functions of parameters without requiring additional formulation. 

\subsection{Nonparametric models}
\label{section:nonparams-models}
 Many authors have recognized the limitations of traditional definitions for parametric identifiability (Definition~\ref{D:parametric-identification}) when working with nonparametric models, and have proposed specialiazed frameworks \citep{hurwicz1950generalization,   matzkin2007nonparametric, matzkin2013nonparametric, pearl2009causal}. Consider, for instance, the framework described by \cite{matzkin2007nonparametric} to define identifiability. Let $\Sall$ be the set of all functions and distributions that satisfy the restriction imposed by some model $\mathcal{M}$,  and assume that any $S \in \Sall$ defines a distribution of the observable variables $P(.;S)$. Similar to our general set up, \cite{matzkin2007nonparametric}[Section~3.1] considers a function $\theta : \Sall \to \Theta$ which defines a feature of $S$ we would like to learn about. \cite{matzkin2007nonparametric} then proposes the following definition.

\begin{definition}\label{D:identification-nonparametric-matzkin}
    For $\vartheta_0 \in \theta(S)$, let 
    \[
        \Gamma(\vartheta_0,\Sall) = \{ P(.;S) | S \in \Sall \text{ and } \theta(S) = \vartheta_0)\},
    \]
    be the set of all probability distributions that satisfy the constraints
    of  model $\mathcal{M}$, and are consistent with $\vartheta_0$ and
    $\mathcal{S}$. Then $\vartheta_1 \in \Theta$ is identifiable if for any
    $\vartheta_0 \in \Theta$ such that $\vartheta_0 \ne \vartheta_1$
    \[
        \Gamma(\vartheta_0,\Sall) \cap \Gamma(\vartheta_1,\Sall) = \emptyset
    \]
\end{definition}
This definition of nonparametric identifiability can be obtained as a special
case of our general definition:
\begin{theorem} \label{T:equivalance-nonpar-our}
  Let $\Sall$ be the set of all functions and distributions that satisfy the restriction imposed by some model $\mathcal{M}$. Define $\lambda(S) = P(.;S)$, then Definition \ref{D:identifiability-function} is equivalent to Definition \ref{D:identification-nonparametric-matzkin}.
\end{theorem}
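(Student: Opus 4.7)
The plan is to reduce Matzkin's condition to the injectivity of the induced binary relation $R_{\theta,\lambda}$, and then invoke Proposition~\ref{prop:injective}. The key move is to rewrite the sets $\Gamma(\vartheta,\Sall)$ directly in terms of the induced relation. With the prescribed $\lambda(S) = P(\cdot;S)$, we have
\[
\Gamma(\vartheta,\Sall) = \{\lambda(S) : S \in \Sall,\ \theta(S) = \vartheta\},
\]
so a point $\ell_0$ lies in $\Gamma(\vartheta_0,\Sall) \cap \Gamma(\vartheta_1,\Sall)$ iff there exist $S,S' \in \Sall$ with $\theta(S) = \vartheta_0$, $\theta(S') = \vartheta_1$, and $\lambda(S) = \lambda(S') = \ell_0$; equivalently, $\vartheta_0 R_{\theta,\lambda}\ell_0$ and $\vartheta_1 R_{\theta,\lambda}\ell_0$. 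Thus Matzkin's disjointness condition on $\vartheta_1$ translates to: no $\ell_0 \in \Lambda$ is $R_{\theta,\lambda}$-related to $\vartheta_1$ and to any other distinct $\vartheta_0$.

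From this dictionary, I would prove both directions in parallel. For ``Definition~\ref{D:identifiability-function} $\Rightarrow$ Definition~\ref{D:identification-nonparametric-matzkin}'': assume $\theta$ is identifiable everywhere, fix any $\vartheta_1 \in \Theta$ and $\vartheta_0 \neq \vartheta_1$, and suppose for contradiction that the intersection is nonempty. The translation above produces an $\ell_0$ with two distinct $R_{\theta,\lambda}$-preimages, violating the uniqueness clause of Definition~\ref{D:identifiability-function}. For the converse: take any $\ell_0 \in \Lambda$; since $R_{\theta,\lambda}$ is surjective (automatic for an induced binary relation) there exists $\vartheta_0 \in \Theta$ with $\vartheta_0 R_{\theta,\lambda}\ell_0$, and applying the Matzkin disjointness condition with $\vartheta_1 = \vartheta_0$ shows that no other $\vartheta_0' \neq \vartheta_0$ can satisfy $\vartheta_0' R_{\theta,\lambda}\ell_0$, giving uniqueness.

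The argument is essentially a definitional chase, so I do not expect any technical obstacle; the only mild subtlety worth flagging is that Matzkin's condition is implicitly quantified over all $\vartheta_1 \in \Theta$ (``for any $\vartheta_0 \in \Theta$ such that $\vartheta_0 \neq \vartheta_1$''), so it encodes a \emph{global} injectivity statement rather than a pointwise one at a single $\vartheta_1$. Surfacing this quantifier is what makes the equivalence with everywhere-identifiability (as opposed to identifiability at a single $\ell_0$) the correct pairing; once that is made explicit, Proposition~\ref{prop:injective} closes the argument.
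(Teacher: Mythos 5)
Your proposal is correct and follows essentially the same route as the paper: both translate the condition $\Gamma(\vartheta_0,\Sall)\cap\Gamma(\vartheta_1,\Sall)\neq\emptyset$ into the existence of an $\ell$ with $\vartheta_0 R_{\theta,\lambda}\ell$ and $\vartheta_1 R_{\theta,\lambda}\ell$, observe that Matzkin's disjointness requirement (read with the universal quantifier over $\vartheta_1$ that you rightly surface) is then exactly injectivity of $R_{\theta,\lambda}$, and close with Proposition~\ref{prop:injective}. The paper compresses your two directions into a single chain of equivalences via the contrapositive, but the content is identical.
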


\begin{proof}
  In our notation, we can write
  Definition~\ref{D:identification-nonparametric-matzkin} as:
  \begin{equation}\label{eq:def-rewritten}
    \forall \vartheta_0, \vartheta_1 \in \Theta, \quad
     \vartheta_0 \neq \vartheta_1 \quad \Rightarrow \quad \Gamma(\vartheta_0, \Sall) \cap \Gamma(\vartheta_1, \Sall) = \emptyset,
   \end{equation}
   which is equivalent to
   \begin{flalign*}
     \eqref{eq:def-rewritten}
     &\qquad \iff \qquad
     \neg \bigg(\Gamma(\vartheta_0, \Sall) \cap \Gamma(\vartheta_1, \Sall) = \emptyset\bigg)
     \quad \Rightarrow \quad \neg \bigg( \vartheta_0 \neq \vartheta_1\bigg)\\
     &\qquad \iff \qquad
     \Gamma(\vartheta_0, \Sall) \cap \Gamma(\vartheta_1, \Sall) \neq \emptyset
     \quad \Rightarrow \quad \vartheta_0 = \vartheta_1\\
     &\qquad \iff \qquad
     \exists \ell \in \Lambda: \,\, \vartheta_0 R_{\theta,\lambda} \ell \text{ and } \vartheta_1 R_{\theta,\lambda} \ell
     \quad \Rightarrow \quad \vartheta_0 = \vartheta_1     
   \end{flalign*}
   which is the definition of injectivity
   (see Definition~\ref{D:binary-relation-properties}). The conclusion follows
   from Proposition~\ref{prop:injective}.
\end{proof}
  
Theorem~\ref{T:equivalance-nonpar-our} shows that Matzkin's nonparametric identification definition is a special case of our more general framework, with a specific choice of statistical universe $\Sall$ and observation mapping $\lambda$. We now provide three examples that cannot be addressed with Definition~\ref{D:identification-nonparametric-matzkin} and require the additional flexibity afforded by Definition~\ref{D:identifiability-function}.

\begin{example}[Fixed margins problem]\label{example:fixed-margins}
  Consider two distributions $P_X(X)$ and $P_Y(Y)$, and denote by $P_{XY}(X,Y)$ their
  joint distribution. The fixed margin problem \citep{frechet1951tableaux} asks what 
  information the marginal distributions $P_X$ and $P_Y$ contain about the joint
  distribution $P_{XY}$.\\
  \emph{Question:} Is $P_{XY}$ identifiable from $P_X$ and $P_Y$?\\
  \emph{Our framework:} Let $\Sall$ be a family of joint distributions for $X$ and $Y$.
  Let $\lambda(S) = (P_X, P_Y)$ and let $\theta(S) = P_{XY}$. The question of the
  identifiability of $P_{XY}$ from $P_X$ and $P_Y$ can be answered by studying the
  injectivity of the induced mapping $R_{\theta, \lambda}$ as in
  Definition~\ref{D:identifiability-function} (see Section~\ref{section:copulas}
  for a detailed treatment).
\end{example}
In the first example, Definition~\ref{D:identification-nonparametric-matzkin}
falls short by not allowing observation mappings of the form
$\lambda(S) = (P_X, P_Y)$ -- a problem transparently addressed by our
definition. The following example describes another setting in which the
same issue arises. 

\begin{example}[Missing data]\label{ex:missing-data}
  If $Y$ is a random variable representing a response of interest, let $Z$ be a
  missing data indicator that is equal to $1$ if the response $Y$ is observed, and $0$
  otherwise. The observed outcome of interest is then drawn from $P(Y \mid Z=1)$.\\
  \emph{Question:} Is the distribution of the missing outcomes $P(Y \mid Z=0)$
  identifiable from that of the observed outcomes $P(Y \mid Z=1)$?\\
  \emph{Our framework:} Let $\Sall$ be a family of joint distributions for $Z$ and $Y$,
  and define $\lambda(S) = (P(Y \mid Z=1), P(Z))$, and $\theta(S) = P(Y \mid Z=0)$. The
  question can be answered by studying the
  injectivity of the induced mapping $R_{\theta, \lambda}$ as in
  Definition~\ref{D:identifiability-function}.
\end{example}
Example~\ref{ex:missing-data} shows that $\theta$ need not be the identity
function: here for instance, we are interested in the conditional distribution
$\theta(S) = P(Y \mid Z=0)$. In fact, $\theta(S)$ does not even need to be a
distribution: in the following example, it is a conditional expectation. 
\begin{example}[Ecological regression]\label{example:ecological-regression}
  Ecological inference is concerned with extracting individual-level information
  from aggregate data \citep{king2013solution}. An instance of the ecological inference
  problem is the ecological regression problem \citep{cross2002regression} which can be
  summarized as follows: suppose we know the distributions $P(Y \mid X)$ and
  $P(Z \mid X)$. What information does this give us about the expectation
  $E[Y \mid X, Z]$?\\
  \emph{Question:} Is $E[Y \mid X, Z]$ identifiable from $P(Y \mid X)$ and
  $P(Z \mid X)$.\\
  \emph{Our framework:} Let $\Sall$ be a family of joint distributions for $Y$, $X$,
  and $Z$. Define $\lambda(S) = (P(Y\mid X), P(Z\mid X))$ and
  $\theta(S) = E[Y \mid X, Z]$. The question can be answered by studying the
  injectivity of the induced mapping $R_{\theta, \lambda}$ as in
  Definition~\ref{D:identifiability-function}.
\end{example}
\subsection{Identification in finite populations}
\label{section:id-finite-pop}

The examples presented so far asked questions about identifiability in the
context of statistical models: the statistical universe, estimand mappings
and observation mappings involved entire distributions (or summaries of
distributions). Implicitly, this corresponds to the ``infinite observations''
perspective of \cite{koopmans1949identification} quoted in introduction. The
missing data problem of Example~6, for instance, asks about the
identifiability of $P(Y \mid Z=0)$ from $P(Y \mid Z=1)$. Consider instead a
finite population of $N$ units and denote by $Y_i$ an outcome of interest for
unit $i = 1, \ldots, N$. Suppose we only observe $Y_i$ if $Z_i = 1$ and
that the outcome is missing when $Z_i=0$; formally, we observe
$\{Y^\ast_i, Z_i\}_{i=1}^N$, where

\begin{equation*}
  Y_i^\ast = \begin{cases}
    Y_i & \text{if } Z_i = 1 \\
    \ast & \text{otherwise}.
    \end{cases}
\end{equation*}
This is the finite population analog to Example~6. A natural identification
question would be: can we identify $\tau = \overline{Y}=\sum_{i=1}^N Y_i$ from
$\{Y_i^\ast, Z_i\}$? Neither Definition~\ref{D:parametric-identification} nor
Definition~\ref{D:identification-nonparametric-matzkin} apply in this setting,
since there are no statistical models (parametric or nonparametric) involved --- 
and yet the identifiability question makes intuitive sense.

\cite{manski2009identification} addresses the problem by introducing a sampling model and applying concepts of identification from nonparametric statistical models. Specifically, let $I \sim \text{Unif}(\{1, \ldots, N\})$ be a random variable selecting a unit uniformly at random in the population. The population average $\tau$ can be rewritten as $E_I[Y_I]$, where $E_I$ is the expectation with respect to the distribution of $Y_I$ induced by $P(I)$. Manski's approach allows us to work with $P_I(Y_I^\ast, Z_I)$ instead of $\{Y_i^\ast, Z_i\}_{i\in I}$, and to rephrase the problem as whether $E_I[Y_I]$ is identifiable from $P_I(Y_I^\ast, Z_I)$ --- the results of  Section~\ref{section:nonparams-models} are now directly applicable. A downside of this approach, however, is that it changes the objective somewhat. Indeed, while $\tau$ and $E_I[Y_I]$ refer to the same quantity, $P_I(Y_I^\ast, Z_I)$ contains strictly less information than $\{Y_I^\ast, Z_I\}$, making it impossible to formulate some seemingly simple identification questions; such as, whether $Y_1$ can be identified from $\{Y_i^{\ast}, Z_i\}$.

By contrast, our general definition accomodates this setting
by simply specifying appropriate $\Sall$, $\theta$, and $\lambda$. Let $\Sall_Y$ be
a set of possible outcome vectors, $\Sall_Z$ be a set of possible assignment
vectors, and define $\Sall = \Sall_Y \times \Sall_Z$. An element $S \in \Sall$
is then a pair $(\{Y_i\}_{i=1}^N, \{Z_i\}_{i=1}^N)$. Finally, define the
observation mapping $\lambda(S) = \{Y_i^\ast, Z_i\}_{i=1}^N$ and
the estimand mapping as, for instance,  $\theta(S) = \overline{Y}$ or
$\theta(S) = Y_1$. The following example illustrates our framework in a slightly
more involved finite-population setting.
\begin{example}[Population identification of causal effects]
\label{example:id-causal-effects} With $N$ units, let  
each unit be assigned to one of two treatment interventions, $Z_i =1$ for
treatment and $Z_i=0$ for control. Under the stable unit treatment value
assumption \citep{rubin1980randomization} each unit $i$ has two potential
outcomes $Y_i(1)$ and $Y_i(0)$, corresponding to the outcome of unit $i$ under
treatment and control, respectively. For each unit $i$, the observed outcome is 
$Y^{\ast}_i = Y_i(Z) = Y_i(1) Z_i + Y_i(0) (1-Z_i)$.
Let $Y(1) = \{Y_1(1), \ldots, Y_N(1)\}$ and $Y(0) = \{Y_1(0), \ldots, Y_N(0)\}$ be
the vectors of potential outcomes and $Y = (Y(1), Y(0))$.\\
\emph{Question:} Is $\tau(Y) = \overline{Y(1)} - \overline{Y(0)}$ identifiable from the
observed data $(Y^\ast, Z)$.\\
\emph{Our framework:} Let $\Sall_Y = \mathbb{R}^N \times \mathbb{R}^N$ be the set of all
possible values for $Y$, $\Sall_Z = \{0,1\}^N$ and $\Sall = \Sall_Y \times \Sall_Z$.
Take $\theta(S) = \tau(Y)$ and $\lambda(S) = (Y^\ast, Z)$ as the estimand and observation mapping, respectively. 
The question is then answerable by studying the injectivity of the induced binary relation $R_{\theta, \lambda}$ as in Definition~\ref{D:identifiability-function}.
\end{example}

\section{Identification analysis}\label{section:id-analysis}

So far, we have shown how a variety of identification questions can be
formulated in our framework, but we have said nothing about how they can
answered. Identification analysis is a three-steps process for answering
such questions. The first step is to establish whether $\theta$ is identifiable or not (Section~\ref{section:ia-identification}). For not idenfiable $\theta$, the second step is to determine its identification region (Section~\ref{section:ia-region}). The third step is to incorporate different assumptions and assessing their impact on the structure of the identification region (Section~\ref{section:ia-assumptions}).

\subsection{Determining if $\theta$ is identifiable}
\label{section:ia-identification}
The most direct---but usually challenging---approach to determine if $\theta$ is $R_{\theta, \lambda}$-identifiable is to use Definition~\ref{D:identifiability-function}. A simpler alternative is to instead show that $\theta(S)$ is a function of $\lambda(S)$ for all $S\in \Sall$; ensuring that each $\lambda(S)$ is in relation with a single $\theta(S)$.

\begin{proposition}\label{proposition:id-function}
	If there exists a funtion $f: \Lambda \rightarrow \Theta$ 
    such that $\theta(S) = f(\lambda(S))$ for all $S \in \Sall$, then 
    $\theta$ is $R_{\theta, \lambda}$-identifiable.
\end{proposition}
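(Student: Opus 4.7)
The plan is to reduce the claim to the injectivity of $R_{\theta,\lambda}$ and then verify that injectivity by unfolding the definition of an induced binary relation. Since $R_{\theta,\lambda}$ is known to be surjective and left-total (by the proposition stated immediately after Definition~\ref{definition:ind-bin-rel}), Proposition~\ref{prop:injective} applies and tells us that $\theta$ is $R_{\theta,\lambda}$-identifiable if and only if $R_{\theta,\lambda}$ is injective in the sense of Definition~\ref{D:binary-relation-properties}. So the entire proof reduces to verifying that criterion.

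To check injectivity, I would take arbitrary $\vartheta, \vartheta' \in \Theta$ and $\ell \in \Lambda$ satisfying $\vartheta R_{\theta,\lambda} \ell$ and $\vartheta' R_{\theta,\lambda} \ell$. By Definition~\ref{definition:ind-bin-rel}, each of these memberships is witnessed by an element of $\Sall$, so I can pick $S, S' \in \Sall$ with $\theta(S) = \vartheta$, $\theta(S') = \vartheta'$, and $\lambda(S) = \lambda(S') = \ell$. The factorization hypothesis then collapses the two witnesses: $\vartheta = \theta(S) = f(\lambda(S)) = f(\ell) = f(\lambda(S')) = \theta(S') = \vartheta'$, which is the required conclusion $\vartheta = \vartheta'$.

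Honestly, there is no real obstacle here; the statement is essentially a formal consequence of how $R_{\theta,\lambda}$ is built from pairs $(\theta(S), \lambda(S))$. The conceptual content is that whenever the estimand can be read off deterministically from the observation, identifiability is automatic, and the proof merely makes that tautology explicit. The only thing to be careful about is keeping track of the two potentially distinct witnesses $S$ and $S'$ — a single witness would not suffice, since different $S$ in the same fiber $\{S : \lambda(S) = \ell\}$ are precisely what identifiability must rule out — but the hypothesis on $f$ is tailored to handle exactly this.
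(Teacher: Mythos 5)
Your proof is correct and is essentially the paper's argument in a different wrapper: the paper fixes $\ell_0$ and checks directly that every $S$ in the fiber $\{S : \lambda(S) = \ell_0\}$ satisfies $\theta(S) = f(\lambda(S)) = f(\ell_0)$, while you reach the identical one-line computation by first passing through the injectivity characterization of Proposition~\ref{prop:injective}. The detour through injectivity changes nothing of substance, so the two proofs coincide.
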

\begin{proof}
  Fix $\ell_0 \in \Lambda$, let $\vartheta_0 = f(\ell_0)$ and consider $\Sall_0 =\{S \in \Sall: \lambda(S) = \ell_0\}$. For any $S \in \Sall_0$ we have that $\theta(S) = f(\lambda(S)) = f(\ell_0) = \vartheta_0$; therefore, $\theta$ is identifiable at $\ell_0$. Since this holds for any $\ell_0 \in \Lambda$, $\theta$ is $R_{\theta, \lambda}$-identifiable.
\end{proof}

To illustrate the concepts in this section, we draw on an extended treatment of Example~\ref{ex:missing-data}, which discusses identification in missing data. It is, for instance, easy to show that the marginal probability $\theta_1(S) = P(Z = 1)$ is identifiable by noticing that it can be written as a function of $\lambda(S) = (P(Y | Z=1), P(Z=1))$ and applying Proposition~\ref{proposition:id-function}. 

In many applications Proposition~\ref{proposition:id-function} is still difficult to apply directly either because $\theta(S)$ is a complicated function of $\lambda(S)$, or because it is not even a function of $\lambda(S)$. Either way, it is often better to first break up the estimand mapping $\theta(S)$ into simpler pieces, establish identifiability for each of them, and then leverage the fact that a function of identifiable quantities is itself identifiable.
\begin{proposition}
\label{proposition:id-operations}
Let $\theta, \theta_1, \theta_2 \in G(\Sall)$, and $f$ a 
function such that:
\begin{equation*}
  \forall S \in \Sall, \qquad 
  \theta(S) = f(\theta_1(S), \theta_2(S)).
\end{equation*}
If $\theta_1$ is $R_{\theta_1,\lambda}$-identifiable and $\theta_2$ is
$R_{\theta_2, \lambda}$-identifiable, then $\theta$ is
$R_{\theta, \lambda}$-identifiable. This trivially generalizes to
$\theta_1,\dots , \theta_T \in G(\Sall)$.
\end{proposition}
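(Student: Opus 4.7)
The plan is to reduce the claim to the pointwise characterization of identifiability given in Definition~\ref{D:identifiability-function}, namely that $\theta_i$ is constant on each fiber $\mathcal{S}_0 = \{S \in \mathcal{S} : \lambda(S) = \ell_0\}$.

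First I would fix an arbitrary $\ell_0 \in \Lambda$ and consider the fiber $\mathcal{S}_0$. Since $\theta_1$ is $R_{\theta_1,\lambda}$-identifiable, it is in particular identifiable at $\ell_0$, so by Definition~\ref{D:identifiability-function} there is a unique $\vartheta_{1,0} \in \Img(\theta_1)$ such that $\theta_1(S) = \vartheta_{1,0}$ for every $S \in \mathcal{S}_0$. The same argument applied to $\theta_2$ yields a unique $\vartheta_{2,0}$ with $\theta_2(S) = \vartheta_{2,0}$ on $\mathcal{S}_0$. Then for any $S \in \mathcal{S}_0$, the hypothesis $\theta(S) = f(\theta_1(S), \theta_2(S))$ gives $\theta(S) = f(\vartheta_{1,0}, \vartheta_{2,0})$, a single value $\vartheta_0$ that does not depend on $S$. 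Thus $\theta$ is identifiable at $\ell_0$, and since $\ell_0 \in \Lambda$ was arbitrary, $\theta$ is $R_{\theta,\lambda}$-identifiable everywhere.

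For the generalization to $\theta_1, \dots, \theta_T$, I would simply note that the same argument applies verbatim: on the fiber $\mathcal{S}_0$, each $\theta_i$ equals a unique constant $\vartheta_{i,0}$, so $\theta(S) = f(\vartheta_{1,0}, \dots, \vartheta_{T,0})$ is constant on $\mathcal{S}_0$. Alternatively, one could fold pairs of components together and invoke the two-variable case inductively, but the direct argument is cleaner.

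There is no real obstacle here; the proof is essentially a bookkeeping exercise. The only subtlety worth being explicit about is that we never need $\theta(S)$ to be expressible as a function of $\lambda(S)$ in the sense of Proposition~\ref{proposition:id-function}—it suffices that $\theta$ be constant on each fiber of $\lambda$, which is exactly what identifiability of each $\theta_i$ delivers and what $f$ then preserves.
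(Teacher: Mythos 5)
Your proof is correct and follows essentially the same route as the paper's: fix a fiber $\Sall_0 = \{S : \lambda(S) = \ell_0\}$, use identifiability of $\theta_1$ and $\theta_2$ to conclude each is constant on $\Sall_0$, and then observe that $f$ of two constants is constant, so $\theta$ is identifiable at $\ell_0$. The extension to $T$ components and the closing remark distinguishing this from Proposition~\ref{proposition:id-function} are both accurate.
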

\begin{proof}
  Fix $\ell_0 \in \Lambda$ and let $\Sall_0 = \{S \in \Sall: \lambda(S) = \ell_0\}$.
  Since $\theta_1$ is $R_{\theta_1,\lambda}$-identifiable and $\theta_2$ is
  $R_{\theta_2,\lambda}$-identifiable, then by Definition~\ref{D:identifiability-function},
  there exists $\vartheta_1 \in \Img(\theta_1)$ and $\vartheta_2 \in \Img(\theta_2)$ such
  that:
  \begin{equation*}
    \forall S \in \Sall_0, \quad \theta_1(S) = \vartheta_1 \,\,
    \text{ and } \,\, \theta_2(S) = \vartheta_2
  \end{equation*}
  and so:
  \begin{equation*}
    \forall S \in \Sall_0, \quad \theta(S) = f(\theta_1(S), \theta_2(S))
    = f(\vartheta_1, \vartheta_2) \equiv \vartheta_0 \in \Img(\theta).
  \end{equation*}
\end{proof}
In our missing data example, the quantity of interest is the average response
$\theta(S) = E[Y]$. Applying the strategy described above, we can write
it as a function of simpler quantities:
\begin{equation*}
  \underbrace{E[Y]}_{\theta(S)} =
  \underbrace{E[Y \mid Z=1] P(Z=1)}_{\theta_a(S)}
  +
  \underbrace{E[Y \mid Z=0] P(Z=0)}_{\theta_b(S)}.
\end{equation*}
Starting with the first term $\theta_a$,
\begin{equation*}
  \theta_a(S) =
  \underbrace{E[Y \mid Z=1]}_{\theta_2(S)}
  \underbrace{P(Z=1)}_{\theta_1(S)},
\end{equation*}
we have already shown that $\theta_1$ is identifiable. Another application
of Proposition~\ref{proposition:id-function} establishes the identifiability
of $\theta_2(S) = E[Y \mid Z=1]$ and
Proposition~\ref{proposition:id-operations} stitches these results together
to establish the identifiability of $\theta_a$. 
The second term $\theta_b$, which also decomposes into two parts
\begin{equation*}
  \theta_b(S) =
  \underbrace{E[Y \mid Z=0]}_{\theta_3(S)}
  \underbrace{P(Z=0)}_{1-\theta_1(S)},
\end{equation*}
is not identifiable because: although $1-\theta_1(S)$ is identifiable (by Proposition~\ref{proposition:id-function}), $\theta_3$ generally is not. To see this, consider the following simple counter-example. Suppose $Y$ is binary, the elements of $\Sall$ are then of the form $S = \{P(Y=1 \mid Z=1)=\alpha, P(Y=1 \mid Z=0)=\beta, P(Z=1)=\gamma\}$. Recall the observation mapping $\lambda(S) = (P(Y=1 \mid Z=1)=\alpha, P(Z=1)=\gamma)$. Fix $\ell_0 = (\alpha_0, \gamma_0) \in \Lambda$ and define $S_1 = (\alpha_0, \beta_1=0, \gamma_0)$ and $S_2 = (\alpha_0, \beta_2 = 1, \gamma_0)$. By construction, $ \theta_3(S_1)=0\ne1= \theta_3(S_2)$ while $\lambda(S_1) = \lambda(S_2)$; applying Definition~\ref{D:identifiability-function} shows that $\theta_3$ is not identifiable. The counter-example illustrates that $\theta_b$ is generally not identifiable which implies that $\theta$ is not identifiable (except when there is no missing data $\theta_1(S) = 1$).

\subsection{Finding $\theta$'s identification region}
\label{section:ia-region}
By Definition \ref{D:identifiability-function}, we know that if an estimand mapping $\theta$ is not identifiable at $\ell_0$, then there exists at least two values $\vartheta_1, \vartheta_2 \in \Theta$ such that $\vartheta_1 R_{\theta,\lambda} \ell_0$ and $\vartheta_2 R_{\theta,\lambda} \ell_0$. The second step in identification analysis is to determine the set of all $\vartheta \in \Theta$ such that $\vartheta R_{\theta,\lambda} \ell_0$.
This set is generally called the identification region of $\theta$ at $\ell_0$
\citep{manski1990nonparametric,imbens2004confidence,romano2008inference}.
\begin{definition}\label{D:identification-region}
Consider $\Sall$ and $\theta, \lambda \in G(\Sall)$. We define the 
identification region of $\theta$ at $\ell_0 \in \Lambda$ as:
\begin{equation*}
  H\{\theta; \ell_0\} \equiv R_{\theta,\lambda}^{-1}(\ell_0) \subseteq \Theta
\end{equation*}
where:
\begin{equation*}
  R_{\theta,\lambda}^{-1}(\ell_0) \equiv \{\vartheta \in \Theta:
  \vartheta R_{\theta,\lambda} \ell_0\}
  = \{\theta(S): S \in \Sall, \lambda(S) = \ell_0\}
\end{equation*}
is the pre-image of $\ell_0$ in $\Theta$.
\end{definition}
Informally, the identification region is the set of all values
of the estimand that are equally compatible with the observation $\ell_0$. If
an estimand mapping $\theta$ is identifiable at $\ell_0$, then a single estimand
is compatible with $\ell_0$ and the identification region reduces to a
singleton.
\begin{proposition}
  For $\theta, \lambda \in G(\Sall)$, we have that
  \begin{itemize}
  \item $\theta$ is $R_{\theta,\lambda}$-identifiable at $\ell_0 \in \Lambda$
    if and only if $H\{\theta; \ell_0\} = \{\vartheta_0\}$ for some
    $\vartheta_0 \in \Theta$,
  \item $\theta$ is $R_{\theta,\lambda}$-identifiable everywhere if and only if
    $H\{\theta; \ell_0\}$ is a singleton for every $\ell_0 \in \Lambda$.
    \end{itemize}    
\end{proposition}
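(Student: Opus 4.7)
The proof is essentially a direct unpacking of the definitions, so the plan is to chase the equivalences without introducing any new machinery. The key observation is that the identification region $H\{\theta;\ell_0\}$ is by construction the preimage $R_{\theta,\lambda}^{-1}(\ell_0)$, i.e., the set of all $\vartheta \in \Theta$ with $\vartheta R_{\theta,\lambda}\ell_0$, whereas $R_{\theta,\lambda}$-identifiability at $\ell_0$ (Definition~\ref{D:identifiability-function}, via Definition~\ref{D:identifiability-binary}) asserts the existence of a \emph{unique} such $\vartheta_0$. These are the same statement phrased differently.

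For the first bullet, I would argue both directions. If $\theta$ is $R_{\theta,\lambda}$-identifiable at $\ell_0$, then by definition there is exactly one $\vartheta_0 \in \Theta$ with $\vartheta_0 R_{\theta,\lambda}\ell_0$, so $H\{\theta;\ell_0\} = \{\vartheta_0\}$. Conversely, if $H\{\theta;\ell_0\} = \{\vartheta_0\}$ for some $\vartheta_0$, then every $\vartheta$ satisfying $\vartheta R_{\theta,\lambda}\ell_0$ must equal $\vartheta_0$, and the singleton itself certifies existence; hence the unique-existence condition of identifiability at $\ell_0$ holds. One subtle point worth flagging, though not an obstacle: non-emptiness of $H\{\theta;\ell_0\}$ is automatic whenever $\ell_0 \in \Lambda = \mathrm{Img}(\lambda)$ because the induced binary relation $R_{\theta,\lambda}$ is surjective by the earlier proposition, so there is no vacuous case to worry about.

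For the second bullet, I would simply apply the first bullet pointwise: $\theta$ is $R_{\theta,\lambda}$-identifiable everywhere iff it is $R_{\theta,\lambda}$-identifiable at every $\ell_0 \in \Lambda$ (by the second clause of Definition~\ref{D:identifiability-function}), which by the first bullet of the present proposition is equivalent to $H\{\theta;\ell_0\}$ being a singleton for every $\ell_0 \in \Lambda$.

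There is really no hard step here; the only thing to be careful about is making the translation between the ``unique $\vartheta_0$'' phrasing (which bundles existence and uniqueness) and the ``singleton'' phrasing (which is a set-theoretic statement), and noting explicitly that surjectivity of $R_{\theta,\lambda}$ takes care of existence for free. A two-line proof citing Definitions~\ref{D:identifiability-binary}, \ref{D:identifiability-function}, \ref{D:identification-region} and the surjectivity of the induced relation should suffice.
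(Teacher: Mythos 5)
Your proposal is correct and follows essentially the same route as the paper: a direct unpacking of Definitions~\ref{D:identifiability-function} and \ref{D:identification-region}, with the second bullet obtained by applying the first pointwise over $\Lambda$. The only cosmetic difference is that you phrase the argument via the preimage $R_{\theta,\lambda}^{-1}(\ell_0)$ while the paper works with $\Sall_0=\{S\in\Sall:\lambda(S)=\ell_0\}$ and $\theta(S)$, which are the two equivalent descriptions of $H\{\theta;\ell_0\}$ given in Definition~\ref{D:identification-region}; your explicit remark that surjectivity of the induced relation rules out the empty case is a worthwhile addition.
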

\begin{proof}
  We only need to prove the first element, since the second follows by
  definition. Fix $\ell_0 \in \Lambda$ and define
  $\Sall_0 = \{S \in \Sall: \lambda(S) = \ell_0\}$. Suppose that $\theta$ is
  $R_{\theta,\lambda}$-identifiable at $\ell_0$. By
  Definition~\ref{D:identifiability-function}, there exits $\vartheta_0$ such
  that $\theta(S) = \vartheta_0$ for all $S \in \Sall_0$. The identification region is then
  \begin{flalign*}
    H\{\theta; \ell_0\}
    &= \{\theta(S): S\in \Sall, \lambda(S) = \ell_0\}\\
    &= \{\theta(S): S \in \Sall\} \\
    &= \{\vartheta_0\}.
  \end{flalign*}
  This completes the first part of the proof. Now suppose that $\theta$ is such
  that $H\{\theta; \ell_0\} = \{\vartheta_0\}$ for some $\vartheta_0 \in \Theta$.
  Then, by Definition~\ref{D:identification-region}, $\theta(S) = \vartheta$ for
  all $S \in \Sall$ such that $\lambda(S) = \ell_0$ which is the definition of
  identifiability (Definition~\ref{D:identifiability-function}). 
\end{proof}
To illustrate the concept of an identification region, consider Example~\ref{ex:missing-data}, with the additional information that $Y \in \mathbb{R}$ (making
$\Theta = \mathbb{R}$), and $P(Z = 1) \ne 1$. Fixing
$\ell_0 = (P^{(0)}(Y \mid Z=1), P^{(0)}(Z=1))\in \Lambda$ we need to check for
every $\vartheta \in \Theta$ whether it belongs to the identification region of
$\ell_0$. The previous section showed that both $\theta_1 = E[Y|Z=1]$ and $\theta_2=P(Z=1)$
are identifiable so let $\vartheta_1$ and $\vartheta_2$ be the unique elements
of $\Theta$ such that $\vartheta_1 R_{\theta_1,\lambda} \ell_0$ and
$\vartheta_2 R_{\theta_2,\lambda} \ell_0$. Now take $P(Y \mid Z = 0)$ to be
any distribution with expectation
$E[Y \mid Z=0] = (\vartheta - \vartheta_2 \vartheta_1) / (1-\vartheta_1)$. By
construction, $S = (P^{(0)}(Y \mid Z=1), P(Y \mid Z=0), P^{(0)}(Z=1))$ is
such that $\lambda(S) = \ell_0$ and $\theta(S) = \vartheta$, therefore 
$\vartheta \in H\{\theta;\ell_0\}$. Since this is true for all
$\vartheta \in \mathbb{R}$, we conclude that $H\{\theta; \ell_0\} = \mathbb{R}$.

Generally, we interpret $\Theta$ as the values of the estimand \emph{a priori} possible, and $H\{\theta; \ell_0\}$ as the values of the estimands
compatible with the observation $\ell_0$. If $H\{\theta; \ell_0\} = \Theta$ then, not only is $\theta$ not identifiable, but $\ell_0$ carries
no information about the range of plausible values for $\theta$. We call such a quantity {\em strongly non-identifiable}.

\begin{definition}[Strong non-identifiability]
  A mapping $\theta \in G(\Sall)$ is said to be strongly non-identifiable if 
  \begin{equation*}
    \forall \ell \in \Lambda, \quad 
    H\{\theta; \ell\} = \Theta
  \end{equation*}
  We will usually write $H\{\theta; \ell\} = H\{\theta\}$. 
\end{definition}
When discussing how to establish identifiability, we argued that it was often helpful to break the quantity of interest into simpler pieces and work separately on each piece before re-combining them --- the same strategy applies to the identification region. The following proposition gives simple rules for combining identification regions.
\begin{proposition}\label{proposition:id-region-operations}
  Let $\lambda, \theta_1, \theta_2 \in G(\Sall)$. Let $\theta \in G(\Sall)$, and consider 
  a function $f$ such that:
    \begin{equation*}
    	\forall S \in \Sall, \theta(S) = f(\theta_1(S), \theta_2(S))
    \end{equation*}
    Then if $\theta_1$ is $R_{\theta_1,\lambda}$-identifiable, the 
    identification region of $\theta$ at $\ell_0 \in \Lambda$ 
    is
    \begin{equation*}
      H\{\theta; \ell_0\} = \{f(\vartheta_0, \vartheta): 
      \, \vartheta \in H\{\theta_2; \ell_0\}\},
    \end{equation*}
    where $H\{\theta_2; \ell_0\}$ is the identification region of 
    $\theta_2$ with respect to $R_{\theta_2,\lambda}$, at $\ell_0$,
    and $\{\vartheta_0\} = R^{-1}_{\theta_1,\lambda}(\ell_0)$.
\end{proposition}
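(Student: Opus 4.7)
The plan is to directly unpack Definition~\ref{D:identification-region}. Define the fiber $\Sall_0 = \{S \in \Sall : \lambda(S) = \ell_0\}$, so that by definition $H\{\theta;\ell_0\} = \{\theta(S) : S \in \Sall_0\}$ and analogously $H\{\theta_2;\ell_0\} = \{\theta_2(S) : S \in \Sall_0\}$. The entire argument then reduces to rewriting the first set using the factorization $\theta(S) = f(\theta_1(S), \theta_2(S))$ together with the fact that, under the hypothesis, $\theta_1$ is constant on $\Sall_0$.

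First I would invoke the assumption that $\theta_1$ is $R_{\theta_1,\lambda}$-identifiable. By Definition~\ref{D:identifiability-function}, identifiability at $\ell_0$ yields a unique $\vartheta_0 \in \Img(\theta_1)$ such that $\theta_1(S) = \vartheta_0$ for every $S \in \Sall_0$; this is exactly the element singled out by $\{\vartheta_0\} = R^{-1}_{\theta_1,\lambda}(\ell_0)$. Next, for any $S \in \Sall_0$, substituting into the given decomposition gives
\[
\theta(S) \;=\; f(\theta_1(S), \theta_2(S)) \;=\; f(\vartheta_0, \theta_2(S)).
\]
Taking the image over $S \in \Sall_0$ and using that $\{\theta_2(S) : S \in \Sall_0\} = H\{\theta_2;\ell_0\}$ then yields
\[
H\{\theta;\ell_0\} \;=\; \{f(\vartheta_0, \theta_2(S)) : S \in \Sall_0\} \;=\; \{f(\vartheta_0, \vartheta) : \vartheta \in H\{\theta_2;\ell_0\}\},
\]
which is the claimed equality.

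If I wanted to be pedantic, I would verify both inclusions explicitly. The inclusion ($\subseteq$) is immediate: each element of $H\{\theta;\ell_0\}$ has the form $\theta(S) = f(\vartheta_0, \theta_2(S))$ for some $S \in \Sall_0$, with $\theta_2(S) \in H\{\theta_2;\ell_0\}$. For ($\supseteq$), any $\vartheta \in H\{\theta_2;\ell_0\}$ equals $\theta_2(S)$ for some $S \in \Sall_0$, and for that same $S$ we have $\theta(S) = f(\vartheta_0, \vartheta)$, so $f(\vartheta_0,\vartheta) \in H\{\theta;\ell_0\}$.

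There is no genuine obstacle in this proof; it is a direct chase through the definitions. The only delicate step is invoking Definition~\ref{D:identifiability-function} correctly: the identifiability of $\theta_1$ everywhere is used only at the single point $\ell_0$, and that is precisely what allows us to pull $\theta_1(S)$ out of $f(\theta_1(S), \theta_2(S))$ as a fixed constant $\vartheta_0$ on the entire fiber $\Sall_0$, while leaving $\theta_2(S)$ free to range over $H\{\theta_2;\ell_0\}$.
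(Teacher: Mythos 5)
Your proof is correct and follows essentially the same route as the paper's: fix the fiber $\Sall_0 = \{S : \lambda(S) = \ell_0\}$, use the identifiability of $\theta_1$ to replace $\theta_1(S)$ by the constant $\vartheta_0$ on that fiber, and then identify the image of $\theta_2$ over $\Sall_0$ with $H\{\theta_2;\ell_0\}$. The explicit two-inclusion check is a harmless elaboration of the set equality the paper states directly.
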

\begin{proof}
  By Definition~\ref{D:identification-region},
  \begin{equation*}
    H\{\theta; \ell_0\}
    = \{f(\theta_1(S),\theta_2(S)): S\in\Sall,\lambda(S) = \ell_0\} 
  \end{equation*}
  Since $\theta_1$ is $R_{\theta_1,\lambda}$-identifiable, by
  Definition~\ref{D:identifiability-function}, there exists
  $\vartheta_0 \in \Theta$ such that $\theta_1(S) = \vartheta_0$
  for all $S \in \Sall$ such that $\lambda(S) = \ell_0$. The identification region is then,
  \begin{equation*}
    H\{\theta; \ell_0\}
    = \{f(\vartheta_0,\theta_2(S)): S\in\Sall,\lambda(S) = \ell_0\}.
  \end{equation*}
  Now notice that
  $H\{\theta_2(S), \ell_0\} = \{\theta_2(S): S\in \Sall, \lambda(S) = \ell_0\}$
  by definition, and so:
  \begin{flalign*}
    H\{\theta; \ell_0\}
    &= \{f(\vartheta_0, \vartheta): \vartheta \in \{\theta_2(S):
    S \in \Sall, \lambda(S) = \ell_0\}\}\\
    &= \{f(\vartheta_0, \vartheta): \vartheta \in H\{\theta_2; \ell_0\}\}
  \end{flalign*}
\end{proof}
When we proved that $H\{\theta; \ell_0\} = \mathbb{R}$ earlier in this section,
we relied implicitly on this result. In fact, the decomposition
\begin{equation}
  \underbrace{E[Y]}_{\theta(S)} =
  \underbrace{E[Y \mid Z=1]}_{\theta_2(S)} \underbrace{P(Z=1)}_{\theta_1(S)}
  -
  \underbrace{E[Y \mid Z=0]}_{\theta_3(S)} \underbrace{P(Z=0)}_{\theta_4(S)}
\end{equation}
has an additional property of interest: the terms $\theta_1, \theta_2$
and $\theta_4$ are identifiable while $\theta_3$ is strongly non-identifiable.
Intuitively, this factorization isolates the non-identifiable part of $\theta$
in a single term. We call this factorization a {\em reduced form}.
\begin{proposition}\label{def:reduced-form} Fix $\mathcal{S}$, $\lambda$, and $\theta$ such that we can factorize as $\theta(S) = f(\{\theta_k(S)\}_{k=1}^K, \theta_\ast(S))$ for some $f$ and $\{\theta_k\}_k^K\in G(\mathcal{S})$. If
  \begin{enumerate}
  \item $\theta_k$ is $R_{\theta_k, \lambda}$-identifiable 
    at $\ell_0$ for $k=1,\dots,K$, and
  \item $\theta_\ast$ is strongly
    non-$R_{\theta_\ast,\lambda}$-identifiable at $\ell_0$.
  \end{enumerate}
  then the \emph{reduced form} of $H\{\theta; \ell_0\}$ is
  \begin{equation*}
    H\{\theta; \ell_0\} 
    = \bigg\{f(\{\vartheta_k\}_{k=1}^K, \vartheta), \,\,
    \vartheta \in H\{\theta_\ast\}\bigg\},
  \end{equation*}

  where $\vartheta_k$ is the unique element of $\Theta$ such that $\vartheta_k R_{\theta_k,\lambda} \ell_0$, for $k=1,\dots, K$. 
\end{proposition}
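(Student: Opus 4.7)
The plan is to view this proposition as an iterated application of Proposition~\ref{proposition:id-region-operations} combined with the definition of strong non-identifiability. The structure of the proof of Proposition~\ref{proposition:id-region-operations} already contains the essential mechanism: for $S \in \Sall_0 = \{S \in \Sall : \lambda(S) = \ell_0\}$, the identifiable term $\theta_1(S)$ is pinned to its unique value $\vartheta_0$, while the other term is free to range over its identification region at $\ell_0$. Here we want to pin down each of $\theta_1(S), \ldots, \theta_K(S)$ simultaneously and let $\theta_\ast(S)$ range freely.

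First, I would extend Proposition~\ref{proposition:id-region-operations} from two arguments to $K+1$ arguments. By induction on $K$ (or simply by replaying the two-line argument in the original proof with $K$ pinned-down terms instead of one), the $R_{\theta_k,\lambda}$-identifiability of each $\theta_k$ at $\ell_0$ yields unique $\vartheta_k \in \Img(\theta_k)$ such that $\theta_k(S) = \vartheta_k$ for every $S \in \Sall_0$. Starting from Definition~\ref{D:identification-region}, we then obtain
\begin{equation*}
H\{\theta; \ell_0\}
= \bigl\{f(\theta_1(S), \ldots, \theta_K(S), \theta_\ast(S)) : S \in \Sall_0\bigr\}
= \bigl\{f(\vartheta_1, \ldots, \vartheta_K, \theta_\ast(S)) : S \in \Sall_0\bigr\}.
\end{equation*}
Recognizing $\{\theta_\ast(S) : S \in \Sall_0\}$ as exactly $H\{\theta_\ast; \ell_0\}$ by Definition~\ref{D:identification-region}, this rewrites as
\begin{equation*}
H\{\theta; \ell_0\}
= \bigl\{f(\vartheta_1, \ldots, \vartheta_K, \vartheta) : \vartheta \in H\{\theta_\ast; \ell_0\}\bigr\}.
\end{equation*}

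Second, I would invoke condition~(2): the strong non-identifiability of $\theta_\ast$ means $H\{\theta_\ast; \ell\} = H\{\theta_\ast\}$ for every $\ell$, and in particular at $\ell_0$. Substituting this into the display above yields the claimed reduced form. I do not expect any serious obstacle: the only subtlety is the multi-argument generalization of Proposition~\ref{proposition:id-region-operations}, which is a completely mechanical induction because each identifiable $\theta_k$ collapses to its unique value $\vartheta_k$ on $\Sall_0$ independently of the others, and the strongly non-identifiable coordinate $\theta_\ast$ does not interact with this collapsing step. The definition of strong non-identifiability then makes the final substitution immediate.
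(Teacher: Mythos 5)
Your proof is correct and follows essentially the same route as the paper's: the paper likewise observes that the argument of Proposition~\ref{proposition:id-region-operations} carries over with the identifiable terms pinned to their unique values on $\Sall_0$, and then uses strong non-identifiability to replace $H\{\theta_\ast;\ell_0\}$ with $H\{\theta_\ast\}$. Your write-up is actually somewhat more explicit than the paper's one-line proof, since you spell out the multi-argument generalization to $K$ identifiable coordinates.
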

\begin{proof}
  The proof follows the same lines as that of
  Proposition~\ref{proposition:id-region-operations}, the difference being that since $\theta_\ast$ is strongly-non identifiable,
  we have $H\{\theta_\ast, \ell_0\} = H\{\theta_\ast\}$.
\end{proof}

In our running example, both $\theta_1(S) = P(Z=1)$ and $\theta_2(S) = E[Y|Z=1]$ are identifiable, whereas $\theta_3(S) = E[Y|Z=0]$ is strongly non-identifiable. The reduced form of the identification region for $\theta = E[Y]$ at $\ell_0 \in \Lambda$ is then
\begin{equation*}
  H\{\theta;\ell_0\} = \{(\vartheta_2 \vartheta_1 + \vartheta (1 - \vartheta_1)), \vartheta \in H\{\theta_3\}\},
\end{equation*}
where  $\vartheta_1$ and $\vartheta_2$ are the unique elements of $\Theta$ such that $\vartheta_1R_{\theta_1,\lambda} \ell_0$ and $\vartheta_2R_{\theta_2,\lambda} \ell_0$, respectively.
Since $\theta_3$ is strongly non-identifiable $H\{\theta_3;\ell_0\} = \mathbb{R}$ and,
therefore, $H\{\theta; \ell_0\} = \mathbb{R}$, when $P(Z=1) \neq 1$.

\subsection{Incorporating assumptions}
\label{section:ia-assumptions}

In the derivation of the identification region of $H\{\theta; \ell_0\} = \mathbb{R}$, we made no assumption about the outcomes $Y$ (only requiring them to be real numbers). Suppose that we assume $Y \in [0,1]$, how does this
affect the identification region of $\theta$ at $\ell_0$? That is the type of
question that the third step of identification analysis seeks to answer. In our
framework, we formalize assumptions as functions inducing restrictions on the
statsistical universe $\Sall$.
\begin{definition}[Assumption]
	An assumption is a function $A: \Sall \rightarrow \mathbb{R}$. The set 
    $\A = \{S\in \Sall: A(S) = 0\}$ is called the subset of $\Sall$ 
    satisfying assumption $A$.
\end{definition}
To incorporate assumptions in our framework, we augment our notation with a
superscript $A$; defining, for instance, $R^A_{\theta,\lambda}$ to be the
restriction of $R_{\theta,\lambda}$ to the set $\A$. In general, the purpose
of an assumption is to make $R^A_{\theta,\lambda}$ ``closer'' to injective
(Figure \ref{fig:assumptions}, provides an intuitive visualization). The restricted identification region is
then a subset of the full identification region,
\begin{equation*}
  H^A\{\theta; \ell_0\} = \{\theta(S): S\in \A, \lambda(S) = \ell_0\}
  \subseteq H\{\theta; \ell_0\}.
\end{equation*}
In particular, an assumption makes $\theta$ identifiable when
$H^A\{\theta; \ell_0\}$ is a singleton. For instance, continuing with our
missing data example, let $A(S) = \delta(P(Y,Z), P(Y)P(Z))$ where $\delta$
is the total variation distance. This specification is equivalent to
assuming that $Y$ and $Z$ are independent\footnote{\cite{bimetrkia2015Mealli} call this assumption
missing always completely at random.}. Under
this assumption, it is easy to verify that $\theta(S) = E[Y]$ is identifiable. 
\begin{figure}\label{fig:assumptions}
  \centering
  \includegraphics[scale=0.5]{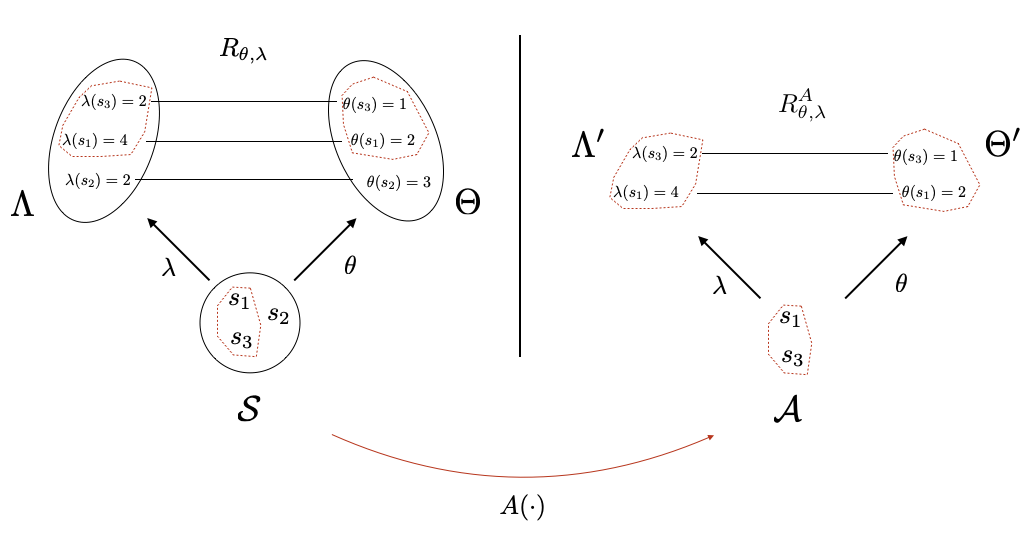}
    \caption{Illustration of an assumption. On the left, 
    $R_{\theta, \lambda}$ is not injective. On the right, 
    $R^{A}_{\theta, \lambda}$ is injective making $\theta$ identifiable.}
\end{figure}
Following \cite{manski2009identification} we distinguish two types of assumptions.
\begin{definition}
	Define $\text{Img}_A(\lambda) = \{\lambda(S): S\in \A\}$. If 
    $\text{Img}_A(\lambda) = \text{Img}(\lambda)$, then assumption 
    A is said to be a priori irrefutable. Otherwise, it is called 
    a priori refutable.
\end{definition}
We use the term \emph{a priori} to stress that we can determine if an assumption
is refutable before observing any data. Once we observe $\lambda(S) = \ell_0$
an \emph{a priori} refutable assumption is either refuted, or it is not, depending on whether
$\ell_0 \in \text{Img}_A(\lambda)$; in other words, an assumption may be
\emph{a priori} refutable and yet not be refuted by a given $\ell_0$. 
In the context of our running example, the assumption that $Y \in [0,1]$ is
\emph{a priori} refutable --- observing data points outside of this interval would
refute the assumption.
Under this assumption, it is easy to verify that
$H^{A}\{E[Y|Z=0]; \ell_0\} = [0,1]$ and, therefore, the reduced form of the
identification region of $\theta(S) = E[Y]$ at $\ell_0$ is:
%
\begin{equation*}
  H^{A}\{\theta;\ell_0\} = \{(\vartheta_2 \vartheta_1 + \vartheta (1 - \vartheta_1)), \vartheta \in [0,1]\},
\end{equation*}
which provides a natural bound for $E[Y]$. 
\section{Case studies}
\label{section:case-studies}


\subsection{Fixed margin problem}
\label{section:copulas}

Continuing with the setup of Example~\ref{example:fixed-margins} with the added simplifying assumption that all the probabilities are continuous, let $F_X$, $F_Y$ and $F_{XY}$ be the cumulative density functions (CDFs) of $P_X$, $P_Y$, and $P_{XY}$ respectively. Recall that $S = P_{XY}$ and $\Sall$ is the set of all continuous joint probability distributions. The observation mapping is $\lambda(S) = (F_X, F_Y)$ and the estimand mappings are 
$\theta_X(S) = F_X$, $\theta_Y(S) = F_Y$ and $\theta(S) = F_{XY}$.

We begin by establishing that $\theta$ is not everywhere $R_{\theta,\lambda}$-identifiable. Let $\Ccal$ be the set of all copulas and 
let $\ell_0 = (F_X^{(0)}, F_Y^{(0)})$. By Sklaar's theorem, for all
$C \in \Ccal$, the function
\begin{equation*}
  F_{XY}^{(0,C)}: (x,y) \mapsto C(F_X^{(0)}(x), F_Y^{(0)}(y))
\end{equation*}
is a valid joint CDF with margins $F_X^{(0)}$ and $F_Y^{(0)}$. In
particular, let $C_1 \neq C_2 \in \Ccal$ and $S_1 = F_{XY}^{(0,C_1)} \in \Sall$, 
$S_2 = F_{XY}^{(0,C_2)} \in \Sall$. Then
$\lambda(S_1) = \lambda(S_2) = (F_X^{(0)}, F_Y^{(0)}) = \ell_0$ by construction,
but $\theta(S_1) \neq \theta(S_2)$ since $C_1 \neq C_2$. That is, $\theta$ is not
$R_{\theta,\lambda}$-identifiable at $\ell_0$. 

Next, we will derive the identification region of $\theta$ at $\ell_0$---the set of all CDFs of continuous joint distributions with margins
$F_X^{(0)}$ and $F_Y^{(0)}$---in its reduced form. By Sklaar's theorem, for any joint CDF $F_{XY}$
with margins $F_X$ and $F_Y$, there exists a unique (since we focus on
continuous probabilities) copula $C \in \Ccal$ such that:
\begin{equation*}
  F_{XY}(x,y) = C(F_X(x), F_Y(y)), \qquad \forall x,y.
\end{equation*}
Denote by $\theta^\ast \in G(\Sall)$ the function that maps each $S = P_{XY}$ to
its unique associated copula, and let $f$ be the function that maps
$(\{F_X, F_Y\}, C)$ to the valid joint CDF $F_{XY}(x,y) = C(F_X(x), F_Y(y))$. In our
notation, we have:
\begin{equation*}
  \theta(S) = f(\{\theta_X(S), \theta_Y(S)\}, \theta^\ast(S)).
\end{equation*}
Since $\theta_X$ and $\theta_Y$ are identifiable, we can write:
\begin{equation}\label{eq:fixed-margin-reduced}
  H\{\theta; \ell_0\} = \bigg\{f(\{\vartheta_X^{(0)}, \vartheta_Y^{(0)}\}, \vartheta),
  \quad
  \vartheta \in H\{\theta^\ast; \ell_0\}\bigg\}
\end{equation}
But it is easy to verify that $H\{\theta^\ast; \ell_0\} = H\{\theta^\ast\} = \Ccal$.
Indeed, reasoning by contradiction, suppose that
$\exists C_0 \in \Ccal \backslash H\{\theta^\ast; \ell_0\}$. Let
$F_{XY}^{(0)} = C_0(F_X^{(0)}, F_Y^{(0)})$. By Sklaar's theorem,
$F_{XY}^{(0)} \in \Sall_0$. But then by definition
$C_0 = \theta^\ast(F_{XY}^{(0)}) \in H\{\theta^\ast; \ell_0\}$ which
is a contradiction. Therefore $\Ccal = H\{\theta^\ast; \ell_0\}$
That is, $\theta^\ast$ is strongly nonidentifiable. So
Equation~\ref{eq:fixed-margin-reduced} is the reduced form representation of
the identification region of $\theta$. 
From Theorem~2.2.3 of
\cite{nelsen1999introduction}, we have that
\begin{equation*}
  	W(x,y) \leq C(x,y) \leq M(x,y) 
\end{equation*}
with $W(x,y) = \max(x+y-1, 0)$ and $M(x,y) = \min(x,y)$. Since $W$ and 
$M$ are both copulas, we have:
\begin{equation*}
  W(F_X(x), F_Y(y)) = \min_{\vartheta \in H\{\theta; \ell_0\}} \vartheta(x,y)
  \quad \leq \quad  \vartheta(x, y)
  \quad \leq \quad  \max_{\vartheta \in H\{\theta; \ell_0\}} \vartheta(x,y) = M(F_X(x), F_Y(y)) 
\end{equation*}
for all $\vartheta \in H\{\theta; \ell_0\}$. This corresponds exactly to the
Hoeffding-Fr\'echet bounds.

We now assume that the joint distribution is a non-degenerate bivariate normal.
Formally, we define the function $A(S) = 0$ iff $S = P_{XY}$ is bivariate normal
and $A(S) = 1$ otherwise. In this case, copulas are of the
form $C_{\rho}(u, v) = \Phi_{\rho}(\Phi^{-1}(u), \Phi^{-1}(v))$ where $\Phi$
is the standard normal $CDF$ and $\Phi_\rho$ is the CDF of a bivariate normal
with mean zero, unit variance, and correlation $\rho$. Now let
$\ell_0 = (F_X^{(0)}, F_Y^{(0)})$ a pair of univariate normal distributions
with parameters $(\mu_X^{(0)}, \sigma_X^{(0)})$ and
$(\mu_Y^{(0)}, \sigma_Y^{(0)})$ respectively, and define
$\tau = (\mu_X^{(0)}, \sigma_X^{(0)}, \mu_Y^{(0)}, \sigma_Y^{(0)})$.
In this setting, we can show by the same reasoning as above that the parameter
$\tau$ is identifiable while the parameter $\rho$ is strongly non-identifiable.
The reduced form of the identification region of $\theta$ is
\begin{equation*}
  H^A\{\theta; \ell_0\} = \{\Phi_{\tau,\rho}, \rho \in [-1,1]\}.
\end{equation*}
where $\Phi_{\tau, \rho}$ is the CDF of the bivariate normal distribution with
parameters $\tau$ and correlation $\rho$. 

If we further assume that the
joint distribution is a non-degenerate bivariate normal with correlation
$\rho_0$ (corresponding to assumption $A_0$) then the identification region
reduces to $H^{A_0}\{\theta; \ell_0\} = \{\Phi_{\tau, \rho_0}\}$; that is,
$\theta$ is identifiable.

\begin{remark}
  The assumption $A$ is an example of a priori refutable assumption. Indeed,
  assumption A implies that the marginal distributions $F_X$ and $F_Y$ are
  themselves normal, so observing non-gaussian marginals would refute the
  assumption.
\end{remark}

\subsection{Causal inference}
For an infinite population of experimental units, we are interested in studying the relative effectiveness of a treatment, denoted by $Z=1$, relative to a control, denoted by $Z=0$, on an outcome of interest. Under the potential outcomes framework, each unit has two outcomes\footnote{We implicitly assume the stable unit treatment value assumption \cite{rubin1980randomization}.} corresponding to what would happen if the unit receives treatment, $Y(1)\in \mathbb{R}$, or control, $Y(0)\in \mathbb{R}$. Since each unit only receives one treatment and, in turn, only has one observed outcome, causal inference is essentially an identification problem. 

Under the nonparametric setup, the statistical universe $\mathcal{S}$ is the space of all possible joint distribution on $(Z,Y(1),Y(0))$, the observation mapping is $\lambda(S) = (P(Y^* | Z), P(Z))$, where $Y^* = ZY(1) + (1-Z)Y(0)$ is the observed outcome, and the estimand mapping $\theta(S) = E[Y(1) - Y(0)]$ is the average causal effect.

The estimand mapping naturally splits into two parts,
\[
  \theta(S) = \underbrace{E[Y(1)|Z=1]P(Z=1) - E[Y(0)|Z=0]P(Z=0)}_{\theta_a}+ \underbrace{E[Y(1)|Z=0]P(Z=0) - E[Y(0)|Z=1]P(Z=1)}_{\theta_b}.
\]
To determine the identifiability of the first term, write
\[
  \theta_a(S) = \underbrace{E[Y(1)|Z=1]}_{\theta_2(S)} \underbrace{P(Z=1)}_{\theta_1(S)} - \underbrace{E[Y(0)|Z=0]}_{\theta_3(S)}\underbrace{P(Z=0)}_{1-\theta_1(S)}.
\]
By Proposition \ref{proposition:id-function}, both $\theta_1(S)$ and $1 - \theta_1(S)$ are identifiable as they are a simple function of $\lambda(S)$. To see that $\theta_2(S)$ is identifiable, notice that
 \[
 \theta_2(S)  = E[Y(1)|Z=1] = E[ZY(1) + (1-Z)Y(0)|Z=1] = E[Y^* | Z=1]
 \]
 is a function of $\lambda(S)$, Proposition \ref{proposition:id-function} establishes the result. Similarly, $\theta_3(S)$ is identifiable. Applying Proposition \ref{proposition:id-operations} stitches the results showing that $\theta_a$ is identifiable. 

The second term, 
\[
  \theta_b(S) = \underbrace{E[Y(1)|Z=0]}_{\theta_4(S)} \underbrace{P(Z=0)}_{1-\theta_1(S)} - \underbrace{E[Y(0)|Z=1]}_{\theta_5(S)}\underbrace{P(Z=1)}_{\theta_1(S)},
\]
is not identifiable because both $\theta_4(S)$ and $\theta_5(S)$ are generally not. To show $\theta_4(S)$ is not identifiable, assume the outcomes are binary and consider 
\begin{align*}
S_1 &= (P(Y(1), Y(0)) = (1,0) | Z=1)= 1, P((Y(1),Y(0)) = (0,0) | Z =0) =1, P(Z=1) = \frac{1}{2}  ) \ \text{and} \\
 S_2 &= (P(Y(1), Y(0)) = (1,0) | Z=1)= 1, P((Y(1),Y(0)) = (1,0) | Z =0) =1, P(Z=1) = \frac{1}{2}  ).
\end{align*}
Clearly, $\lambda(S_1) = \lambda(S_2)$ while $\theta_4(S_1)=0 \ne 1 = \theta_4(S_2)$, applying Definition \ref{D:identifiability-function} shows that $\theta_4$ is not identifiable. A similarly approach shows that $\theta_5(S)$ is not identifiable. 

To derive the reduced form for the identification region it remains to show that $\theta_b$ is strongly non-identifiable. Fix $\ell_0 = (P^{(0)}(Y^*|Z)P^{(0)}(Z))$, and let $\vartheta_1, \vartheta_2,$ and $\vartheta_3$ be the unique values in $\Theta$ corresponding to $\vartheta_1 R_{\theta_1,\lambda}\ell_0$, $\vartheta_2 R_{\theta_2,\lambda}\ell_0$, and $\vartheta_3 R_{\theta_3,\lambda}\ell_0$, respectively. Let $Y^m = Y(1)(1-Z) + Y(0)Z$ by the unobserved potential outcome. Notice that we can generally write elements of $\Sall$ as $S = (P(Y^m| Y^*,Z) P(Y^*|Z) P(Z)) $. For $\alpha \in \mathbb{R}$, let $S_\alpha = (P^{(\alpha)}(Y^m| Y^*,Z) P^{(0)}(Y^*|Z) P^{(0)}(Z)) $, where $P^{(\alpha)}(Y^m| Y^*,Z)$ is a distribution independent of $Y^*$ with mean $\alpha$. For all values of $\alpha$, $\lambda(S_\alpha) = \ell_0$ and $\theta_b(S_\alpha) = \alpha$; giving us that $H\{\theta_b, \ell_0\} = \mathbb{R}$, which shows that $\theta_b$ is strongly non-identifiable. The reduced form is then, 

\[
H\{\theta, \ell_0\} = \left\{(\vartheta_2\vartheta_1 - \vartheta_3 (1-\vartheta_1) - \vartheta),\  \vartheta \in H\{\theta_b\}  \right\} = \mathbb{R}.
\]

We now consider how the reduced form is impacted by different assumptions. Assume the data will be collected from a Bernoulli randomized experiment, that is $Z$ is independent of $(Y(1),Y(0))$ or, using the notation from the previouse section, $A_1(S) = \delta(P(Y(1),Y(0),Z) ,P(Y(1), Y(0))P(Z))$. This assumption is \emph{a priori} irrefutable as $Img_{A_1}(\lambda) = Img(\lambda)$.

Assumption $A_1$ makes $\theta_4(S)$ point identifiable as $\theta_4(S) = E[Y(1)|Z=0] = E[Y(1)|Z=1] = \theta_2(S)$ --- which we already showed is identifiable. Similarly $\theta_5(S)= \theta_3(S)$, and is also identifiable. Some basic algebra shows that the reduced form for $\ell_0 \in \Lambda$ is the singleton,
\[
H\{\theta, \ell_0\} = \left\{\vartheta_2 - \vartheta_3 \right\}.
\]

Now consider the alternative assumption asserting that $Y(1),Y(0) \in [0,1]$. This assumption is \emph{a priori} refutable as $Img_{A_2}(\lambda) \ne Img(\lambda)$, that is, we have changed the image of the observation mapping. In this setting, it is useful to rewrite the estimand mapping as,
\begin{align*}
  \theta(S) =& E[Y(1)|Z=1]P(Z=1) + E[Y(1)|Z=0]P(Z=0) - E[Y(0)|Z=1]P(Z=1) - E[Y(0)|Z=0]P(Z=0) \\
  =& E[Y^*|Z=1] P(Z=1) - E[Y^*|Z=0](1-P(Z=1)) \\
  \ & + \left(E[Y(1)|Z=0] - E[Y(0)|Z=1] \frac{P(Z=1)}{1-P(Z=1)}\right) (1-P(Z=1))
\end{align*}
The identification region for $\theta$ is,
\begin{align*}
  H^{A_2}\{\theta, \ell_0\}  = \left\{\left(\vartheta_2\vartheta_1 - \vartheta_3 (1-\vartheta_1) + \left[\vartheta - \frac{\vartheta' \vartheta_1}{1-\vartheta_1}\right](1-\vartheta_1)\right),\  \vartheta \in H^{A_2}\{\theta_4\}, \vartheta' \in H^{A_2}\{\theta_5\}  \right\} 
\end{align*}
Since both $\theta_4$ and $\theta_5$ are strongly non-identifiable, the $Img(\theta_4)=Img(\theta_5) = [0,1]$. We can then rewrite the identification region as
\begin{align*}
  H^{A_2}\{\theta, \ell_0\}  &= \left\{\left[\vartheta_2\vartheta_1 - \vartheta_3 (1-\vartheta_1) + (1-\vartheta_1)\right] - \vartheta, \ \vartheta \in [0,1]  \right\} \\
  &= \left\{\left[\vartheta_2\vartheta_1 - \vartheta_3 (1-\vartheta_1) \right] - \vartheta, \ \vartheta \in [-\vartheta_1,1 - \vartheta_1]  \right\}.
\end{align*}

If we further restricted the treatment assignment probabilities to be between 0.4 and 0.6, the reduced form would become
\[
  H^{A_2}\{\theta, \ell_0\}  = \left\{\left[\vartheta_2\vartheta_1 - \vartheta_3 (1-\vartheta_1) \right] - \vartheta, \ \vartheta \in [-0.6,0.6]  \right\}.
\]
The reduced form naturally provides nonparametric bounds for the causal effects. Adding further assumptions can tighten the bound until we are left with a single point, as was the case when we assumed the data were collected from a Bernoulli randomized experiment.

\section{Discussion}\label{section:conclusion}

In this paper, we propose a unifying perspective on identification. Our theory centers around the idea that identifiability can be defined in terms of the injectivity of a certain binary relation. Examining the literature through this lens, we show that existing ad-hoc definitions are special cases of our general framework. One benefit of our flexible formulation is that it brings a new level of transparency and transferability to the concept of identification, allowing us to apply it in settings in which traditional definitions can not be used (Examples~\ref{example:fixed-margins},~\ref{ex:missing-data}, ~\ref{example:ecological-regression}). In addition to providing a flexible --- and completely general --- definition of identifiability, we formalize a three-step process, called identification analysis, for studying identification problems.%
Identification logically precedes estimation: this paper has focused exclusively on the former. A challenge, when thinking about these concepts, is that identification deals with the idealized ``infinite number of observations'' setting, while estimation happens in finite samples. A quantity can, therefore, be identifiable, but difficult to estimate precisely in practice. Nevertheless, thinking about identification first is a critical step, and we hope that our framework will help in that regard.

\bibliography{ref}

\end{document}